%% file: main.tex
\documentclass[conference]{IEEEtran}
\IEEEoverridecommandlockouts

\usepackage{cite}
\usepackage{amsmath,amssymb,amsfonts}
\usepackage{algorithmic}
\usepackage{graphicx}
\usepackage{textcomp}
\usepackage{xcolor}
\usepackage{amsthm}
\usepackage[utf8]{inputenc}
\usepackage[nolist]{acronym}
\usepackage{subcaption}

\usepackage{subcaption}
\usepackage[percent]{overpic}
\usepackage{xurl}
\usepackage[hidelinks]{hyperref}
\newtheorem{lemma}{Lemma}
\def\BibTeX{{\rm B\kern-.05em{\sc i\kern-.025em b}\kern-.08em
    T\kern-.1667em\lower.7ex\hbox{E}\kern-.125emX}}
    
\begin{document}
\bstctlcite{IEEEexample:BSTcontrol}
\input{AcronymList.tex}
\title{VR-VFL: Joint Rate and Client Selection for Vehicular Federated Learning Under Imperfect CSI  \vspace{-0.25cm}
}

\author{Metehan Karatas, Subhrakanti Dey, Christian Rohner, Jos\'{e} Mairton Barros da Silva Júnior\\
Uppsala University, Uppsala, Sweden \\ Email: \{metehan.karatas, christian.rohner, mairton.barros\}@it.uu.se, subhrakanti.dey@angstrom.uu.se 
\thanks{The computational and data processing tasks were enabled by resources provided by the National Academic Infrastructure for Supercomputing in Sweden (NAISS), partially funded by the Swedish Research Council through grant agreement no. 2022-06725. The work of C. Rohner is partly supported by the Swedish Research Council through grant agreement no. 2024-05758.} \vspace{-0.55cm}
}

\maketitle

\begin{abstract}
    Federated learning in vehicular edge networks faces major challenges in efficient resource allocation, largely due to high vehicle mobility and the presence of imperfect channel state information. Many existing methods oversimplify these realities, often assuming fixed communication rounds or ideal channel conditions, which limits their effectiveness in real-world scenarios. To address this, we propose variable rate vehicular federated learning (VR-VFL), a novel federated learning method designed specifically for vehicular networks under imperfect channel state information. VR-VFL combines dynamic client selection with adaptive transmission rate selection, while also allowing round times to flex in response to changing wireless conditions. At its core, VR-VFL is built on a bi-objective optimization framework that strikes a balance between improving learning convergence and minimizing the time required to complete each round. By accounting for both the challenges of mobility and realistic wireless constraints, VR-VFL offers a more practical and efficient approach to federated learning in vehicular edge networks. Simulation results show that the proposed VR-VFL scheme achieves convergence approximately 40\% faster than other methods in the literature.
\end{abstract}

\begin{IEEEkeywords}
Federated learning, vehicular edge networks, imperfect channel state information, client scheduling.
\end{IEEEkeywords}
\vspace{-0.30cm}
\section{Introduction} \vspace{-0.1cm}

\Ac{FL} enables collaborative machine learning model training across multiple devices without the need to communicate raw data \cite{mcmahanCommunicationEfficientLearningDeep2017}. Each \ac{FL} training iteration, termed a round, consists of local model updates at devices, local model updates are communicated to a server that aggregates a global model, and sends it back to the local nodes with the round time dictating how long this process takes.

Owing to its inherently distributed architecture, federated learning is well-suited for deployment in edge networks; however, FL still faces inherent challenges, especially in wireless environments~\cite{hellstromWirelessMachineLearning2022}. The communication overhead associated with frequent model exchanges between devices and a central server remains a significant constraint \cite{mcmahanCommunicationEfficientLearningDeep2017}. Moreover, the heterogeneity of edge devices in terms of data distributions, often \ac{non-IID} \cite{zhaoFederatedLearningNonIID2018}, and their computational and communication capabilities complicate the \ac{FL} training and impact model performance. 

A particularly challenging and relevant application domain is \ac{FLVEN} \cite{posnerFederatedLearningVehicular2021}, leveraging vehicles as mobile learning clients. The inherent high mobility in vehicular networks exacerbates challenges related to client availability and communication reliability due to high mobility induced rapid channel variations via Doppler effects, severely limiting the accuracy and timeliness of \ac{CSI} acquisition \cite{liResourceAllocationD2DBased2020a}. Due to these reasons, operating under \ac{I-CSI} is the norm. Relying on \ac{I-CSI} complicates resource management decisions in \ac{FLVEN}, such as client scheduling and rate allocation, undermining the potential for efficient FL deployment.

Although some studies incorporated channel uncertainty~\cite{salehiFederatedLearningUnreliable2021} or analyzed the effects of \ac{I-CSI}~\cite{paseConvergenceTimeFederated2021}, significant gaps persist, particularly in the context of \ac{FLVEN}. While \cite{salehiFederatedLearningUnreliable2021} presented an effective framework for FL under channel uncertainty, it did not account for \ac{I-CSI} or client mobility. Additionally, its reliance on multiple retransmissions neglects the resulting delay in global convergence. The work in~\cite{paseConvergenceTimeFederated2021}, while insightful for general \ac{FL}, relied on simplifying assumptions such as identical \ac{SINR} across devices, achieved through compensating transmit powers, and optimistic capacity achieving transmission schemes even with imperfect CSI, potentially overestimating participation from users with poor average channel conditions. It also assumes a statistical CSI model rather than true instantaneous \ac{I-CSI}, using the channel's distributional properties without incorporating estimation mechanisms or errors.

Similarly, authors in \cite{waduFederatedLearningChannel2020} employed a Lyapunov framework for scheduling under \ac{I-CSI}, but lack key vehicular aspects such as mobility and large-scale fading variations. In \cite{zhouDynamicResourceManagement2024}, despite using an accurate \ac{I-CSI} model, its primary target is maximum client participation, without any discussion on data heterogeneity or its impact on convergence quality. Moreover, a critical modeling detail often overlooked in \ac{FL} works under \ac{I-CSI}~\cite{paseConvergenceTimeFederated2021, zhangVehicleSelectionResource2024} is that the channel estimation error acts as interference in the \ac{SINR} expression rather than contributing to the useful signal power, an effect only explicitly considered in a few studies such as~\cite{zhouDynamicResourceManagement2024}.

\Ac{FLVEN} has received significant research attention in areas such as resource allocation and scheduling. However, vehicular communication challenges under realistic channel conditions remain largely unexplored. The work in \cite{zhangVehicleSelectionResource2024} has neglected \ac{I-CSI} altogether while employing fixed round times. The authors in \cite{yanDynamicSchedulingVehicletoVehicle2024} have not considered \ac{I-CSI} effects and data heterogeneity in their problem formulation. More broadly, the related works~\cite{salehiFederatedLearningUnreliable2021}, \cite{paseConvergenceTimeFederated2021}, \cite{waduFederatedLearningChannel2020}, \cite{zhouDynamicResourceManagement2024}, \cite{zhangVehicleSelectionResource2024} and \cite{yanDynamicSchedulingVehicletoVehicle2024} continue to assume fixed round times per FL iteration. Such rigidity fails to adapt to the dynamic channel conditions and computational heterogeneity inherent in edge networks, where flexible time and resource allocation per round is crucial for efficient and accurate learning.  
\begin{figure}[t!]
    \centerline{\includegraphics[width=0.39 \textwidth]{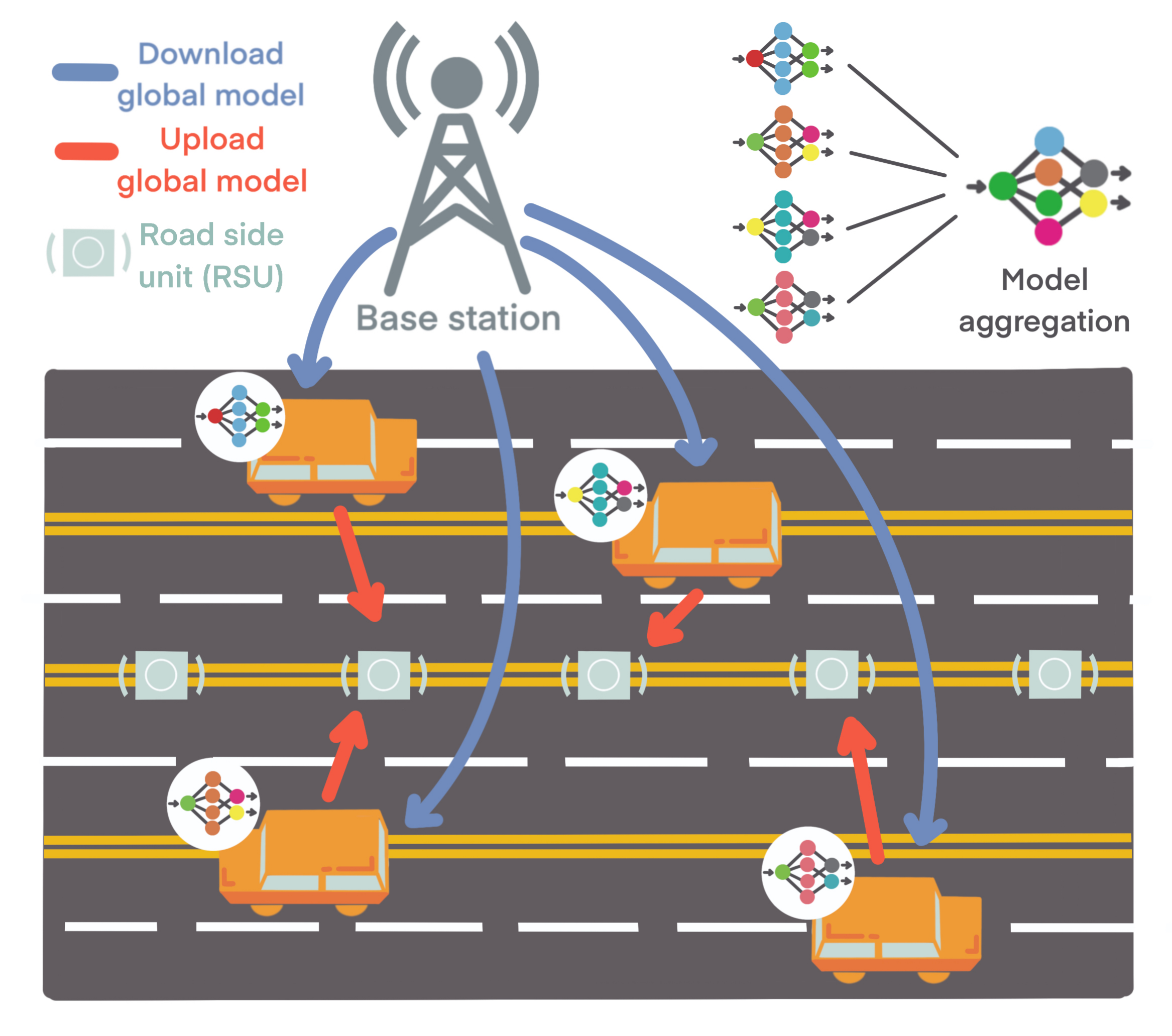} \vspace{-0.2cm}}
    \caption{The \ac{FL} operation, where four vehicles update their locally trained models to the RSUs, then the base station aggregates the models and broadcasts the new global model.} \vspace{-0.45cm}
    \label{fig}
\end{figure} 
To bridge these gaps, we propose a novel joint rate selection and client scheduling algorithm, termed \ac{VR-VFL}, designed for \ac{FLVEN} systems operating under \ac{I-CSI} due to mobility. Our main contributions in this work are: \vspace{-0.05cm}
\begin{itemize}
\item In Section~\ref{sec:systemModel}, we model time correlated stochastic \ac{I-CSI} channel using a Gauss-Markov process and integrate it into the \ac{SINR} expression for \ac{FLVEN}; a previously unexplored integration that enables accurate modeling of mobility induced interference effects in such scenarios.
\item In Section~\ref{clientParticipationRateSelection}, we propose a novel rate selection and client scheduling problem for \ac{FLVEN} under realistic \ac{I-CSI}. We adaptively select individual transmission rates for chosen vehicles based on their wireless and learning conditions, enabling control over the round duration in each communication round.
\item To this end, we introduce a novel bi-objective optimization problem that balances learning convergence of \ac{FL} with round time. We prove that the optimization problem is bi-convex, and propose a \ac{BCD} algorithm to solve the problem efficiently while proving that the solution converges to a stationary point.
\item In Section~\ref{sec:numericalResults}, our results show that \ac{VR-VFL} enables significant time savings, completing 1000 rounds in approximately 64k seconds for both IID and non-IID settings, achieving $73\%$ and $60\%$ test accuracy respectively, with the CIFAR-10 dataset, while the schemes from \cite{salehiFederatedLearningUnreliable2021} take around 110k seconds to reach the same accuracy. Notably, \ac{VR-VFL} achieves comparable performance to these schemes with $42\%$ reduced training time. 
\end{itemize} \vspace{-0.15cm}
To the best of our knowledge, this is the first work in \ac{FLVEN} to jointly optimize client selection, per client rate allocation, and flexible round timing under a realistic \ac{I-CSI} model, thus moving beyond conventional resource allocation by directly controlling round duration through rate selection. The source code for \ac{VR-VFL} and our simulations are publicly available.\footnote{\url{https://github.com/Eshinos/ICC2026_VRVFL_Simulation}} \vspace{-0.2cm}

\section{System Model} \label{sec:systemModel}
\vspace{-0.0cm}
We consider a vehicular network located on a highway as shown in Fig.~\ref{fig}, consisting of $V$ vehicles and evenly spaced \acp{RSU}, to which the vehicles transmit their trained models at every global \ac{FL} round. Vehicles arrive according to a Poisson process with rate $\lambda$, enter network coverage area, and leave depending on their speed, which is constant and randomly selected between a predefined interval. \vspace{-0.5cm}
\subsection{Channel Model}
The channel between the vehicles and \acp{RSU} are affected by the Doppler effect due to high velocity of the vehicles. We denote the channel gain between vehicle~$v$ and its nearest \ac{RSU} as $g_{v}= |h_{v}|^2 L_{v}$ where $h_{v}$ is the fast fading channel component, and $L_{v}$ is the large scale channel gain incorporating path-loss and shadow fading. Due to high mobility of the vehicles, the channel gain is time varying and we model it as a first order Gauss-Markov process \cite{liResourceAllocationD2DBased2020a}. We consider the time-varying fast fading component as \vspace{-0.1cm}
\begin{equation}
    h_{v}=\epsilon_v \hat{h}_{v}+\sqrt{1-\epsilon_v^2} \tilde{h}_{v}, \vspace{-0.1cm}
\end{equation} 
where $\hat{h}_{v}$ is the minimum mean squared error estimate of the fast fading channel gain, $\epsilon_v$ is the temporal correlation coefficient for vehicle $v$, and $\tilde{h}_{v}$ is the channel estimation error term that is independent of $\hat{h}_{v}$. Note that we have $\hat{h}_{v}, \tilde{h}_{v} \sim \mathcal{CN}(0,1)$ for both. We define the temporal correlation coefficient as $\epsilon_v\triangleq J_0(2\pi f_{v}^s T)$, where $f_{v}^s$ is the Doppler frequency for vehicle $v$, defined as $f_{v}^s = \vartheta_v f_c/c$, with $\vartheta_v$ as the velocity of vehicle $v$, $f_c$ as the carrier frequency, $c$ as the speed of light, and $T$ as the channel feedback delay. Lastly, $J_0$ is the zeroth order Bessel function of the first kind.  \vspace{-0.2cm}

\subsection{Signal Model} \vspace{-0.05cm}

Vehicle $v$ transmits its local model through its wireless channel, $g_{v}$. We assume orthogonal transmissions across vehicles, so the received signal from vehicle $v$ is
\begin{equation}
    S_v=\underbrace{\sqrt{L_{v}P_v} \epsilon_v \hat{h}_v s_v}_{\text{Signal of Interest}} 
    + \underbrace{\sqrt{L_{v}P_v}\sqrt{1 - \epsilon_v^2} \tilde{h}_v s_v}_{\text{Estimation Error}} 
    + \underbrace{n_0}_{\text{Noise}},
\end{equation}
where $s_v$ is the unit-power transmitted symbol such that $\mathbb{E} \{ \lvert {s_v} \rvert ^2 \}=1$, $P_v$ is the transmission power of vehicle $v$, and noise term $n_0$ follows $\mathcal{CN}(0,N_0)$.

The transmission is negatively affected by the \ac{I-CSI}, i.e., the estimation error is not known and it cannot be exploited by the transmission. Then, the \ac{SINR} for vehicle $v$ is \vspace{-0.2cm}
\begin{equation} \label{eq:sinr}
    \gamma_v = \frac{P_vL_{v} \epsilon_v^2 |\hat{h}_{v}|^2}{W_v N_0 + P_vL_{v} (1-\epsilon_v^2)|\tilde{h}_{v}|^2},
\end{equation}
where $W_v N_0$ is the noise power with $W_v$ as the bandwidth allocated to vehicle $v$. We emphasize that the channel estimation error term appears in the denominator of the \ac{SINR} expression. This indicates that the channel estimation error cannot contribute as a useful signal.

Accordingly, the achievable capacity for vehicle $v$ is \vspace{-0.2cm}
\begin{equation} \label{eq:Cap} 
    C_v = W_v \log_2(1+\gamma_v). \vspace{-0.2cm}
\end{equation}
Since the receiver has \ac{I-CSI}, the achievable capacity is negatively affected by the channel estimation error. Note that this error scales with the transmission power $P_v$, implying that the capacity degradation cannot be solved by increasing $P_v$. \vspace{-0.2cm}

\subsection{Learning Model}
We denote the dataset of vehicle $v$ as $\mathcal{D}_v=\{ \mathcal{X}_v, \quad \mathcal{Y}_v \}$ and its cardinality as $D_v$. The training dataset for vehicle $v$ is $\mathcal{X}_v= \{ x_{v,1},x_{v,1},\dots x_{v,D_v}\}$ and the corresponding labels are $\mathcal{Y}_v= \{ y_{v,1},y_{v,1},\dots y_{v,D_v}\}$. The model parameters are $\boldsymbol{w}$ and the loss function for data sample $j$ of vehicle $v$ is $f(\boldsymbol{w}; x_{v,j}, y_{v,j})$. For each vehicle $v$, the local loss function is \vspace{-0.1cm}
\begin{equation}
    F_v(\boldsymbol{w})=\frac{1}{D_v}\sum\nolimits_{j=1}^{D_v}f(\boldsymbol{w}; x_{v,j}, y_{v,j}), \vspace{-0.1cm}
\end{equation}
and the global loss function is $F(\boldsymbol{w})=\frac{1}{D}\sum\nolimits_{v\in \mathcal{V}}D_v F_v(\boldsymbol{w})$ where $D=\sum_{v\in \mathcal{V}}D_v$ is the total number of data samples. Then, we formulate the \ac{FL} problem as finding the optimal model parameters $\boldsymbol{w}^*$ such that \vspace{-0.1cm}
\begin{equation} \label{eq:FLproblem}
    \boldsymbol{w}^* = \arg\min_{\boldsymbol{w}} F(\boldsymbol{w}). \vspace{-0.1cm}
\end{equation}

We assume that \ac{FL} training follows a round-based approach. In each round, the current global model is broadcasted to the vehicles. Vehicles participating in the round train this model with their local datasets, and transmit their trained models to the nearest \ac{RSU}. The received models are aggregated and the global model is updated as \vspace{-0.1cm}
\begin{equation} \label{eq:standardAggregation}
    \boldsymbol{w}_{t+1} = {\sum_{v \in \mathcal{V}_t}D_v \boldsymbol{w}_{t,v}} \Big/ {\sum_{v \in \mathcal{V}_t}D_v},\vspace{-0.1cm}
\end{equation} 
where $\boldsymbol{w}_{t,v}$ is the model of vehicle $v$ trained at round $t$, $\mathcal{V}_t$ is the set of vehicles that successfully participate in round $t$ and $\boldsymbol{w}_{t+1}$ is the updated global model. The training process continues until the global model converges. Note that, although some parameters may change between different rounds, we omit the round index $t$ for simplicity. \vspace{-0.15cm}
\subsection{Successful Transmission Condition} \label{subsection:successfulTransmissionCondition}
We denote $T_t$ as the allocated duration for global round $t$ in \ac{FL} training, during which the trained model, consisting of $Z$ bits, must be transmitted by participating vehicles. The sojourn time of a vehicle refers to the duration it remains within the coverage area. Then, any vehicle $v$ contributing in round $t$ must satisfy the inequality  \vspace{-0.10cm}
\begin{equation} \label{eq:requirement}
    C_v \geq R_{v,t}= \max \{ Z/T_t, Z/T_v^t \}, \vspace{-0.10cm}
\end{equation} 
where $R_{v,t}$ is the transmission rate of vehicle $v$ at round $t$ and $T_v^t$ is the remaining sojourn time, i.e., remaining distance in coverage divided by velocity. The minimum rate required for a vehicle to participate is given by $R_{v,t}^{\textrm{min}}=Z/T_t$, which ensures that the full model of size $Z$ can be transmitted within the round time $T_t$. On the other hand, the maximum transmission rate is defined as $R_{v,t}^{\textrm{max}}=C_v$ when $|\tilde{h}_{v}|^2=0$ according to Eqs. \eqref{eq:sinr} and \eqref{eq:Cap}, corresponding to the upper bound on the channel capacity that captures the best-case scenario of perfect \ac{CSI}. The condition in Eq. \eqref{eq:requirement} guarantees that a vehicle can transmit its local model within the allocated round time, taking into account both its transmission rate and remaining sojourn time. The round time $T_t$ itself is determined by the vehicle with the lowest transmission rate among all clients in the round $t$, and is given by $T_t = \min_{ v \in \mathcal{V}_t} \{ Z/R_{v,t} \} $. However, vehicles with remaining sojourn time, $T_v^t$, less than $T_t$ can still participate by transmitting at a higher rate to offset their reduced availability window. From Eqs. \eqref{eq:sinr} and \eqref{eq:Cap}, the condition in Eq. \eqref{eq:requirement} is equivalent to
\begin{equation} \label{eq:condition}
    |\hat{h}_{v}|^2 \geq \underbrace{\frac{\left( 2^\frac{R_{v,t}}{W_v} - 1 \right) \left(1-\epsilon_v^2\right)}{\epsilon_v^2}}_{a_v} |\tilde{h}_{v}|^2+\underbrace{\frac{\left( 2^\frac{R_{v,t}}{W_v} - 1 \right) W_v N_0}{P_vL_{v} \epsilon_v^2}}_{b_v}.
 \vspace{-0.15cm}\end{equation}

Note that for a given vehicle $v$ with given transmission rate $R_{v,t}$, $a_v$ and $b_v$ are constants related to the relative power of the estimation error and noise, respectively, with respect to the power of the signal of interest. The system has access to the estimate $\hat{h}_{v}$, and the randomness in Eq. \eqref{eq:condition} arises solely from the estimation error term $\tilde{h}_{v}$. We denote the condition in Eq. \eqref{eq:condition} as event $A_v$ and rewrite it as \vspace{-0.1cm}
\begin{equation} \label{eq:Av}
    A_v:|\tilde{h}_{v}|^2 \leq ({|\hat{h}_{v}|^2-b_v})/{a_v}. \vspace{-0.00cm}
\end{equation}

 We can evaluate the probability of successful transmission, event $A_v$, over the randomness of $|\tilde{h}_{v}|^2$. Then, the probability of successful transmission for vehicle $v$ is \vspace{-0.1cm}
\begin{equation} \label{eq:outageProbability}
    \mathbb{P}(A_v \mid \hat{h}_{v}) =
    \begin{cases}
        1 - \exp \left(-({|\hat{h}_{v}|^2 - b_v})/{a_v} \right), & \text{if } |\hat{h}_{v}|^2 > b_v, \\
        0, & \text{otherwise},
    \end{cases}
\end{equation}
from \ac{CDF} of the exponential random variable that is $|\tilde{h}_{v}|^2$. \vspace{-0.00cm}

\section{Client Participation and Rate Selection} \label{clientParticipationRateSelection} \vspace{-0.0cm}
\subsection{Convergence Analysis} \vspace{-0.0cm}
The convergence analysis of FL over wireless networks has been extensively studied in prior works, such as~\cite{chenJointLearningCommunications2021}. However, the convergence analysis of \ac{FLVEN} differs from this standard, especially when I-CSI is present. This distinction arises due to several factors: limited participation from vehicles due to different sojourn times, dynamic number of clients in the system, non i.i.d. data distribution among vehicles, different round times, and severity of \ac{I-CSI}. 

These challenges directly affect the  convergence of the \ac{FL} problem in Eq. \eqref{eq:FLproblem}. Specifically in our \ac{FLVEN} setting, the expected decrease in the global objective at round $t$, with respect to the global minimum, $F^*$, i.e., $\mathbb{E}[F(\boldsymbol{w_t})-F^*]$, is influenced by the unreliable participation and transmission conditions of vehicles. In~\cite{salehiFederatedLearningUnreliable2021}, it is shown that the upper bound on $\mathbb{E}[F(\boldsymbol{w_t})-F^*]$ depends on key parameters formulated as
\begin{equation} \label{eq:objectiveParameter}
    \mathbb{E}[F(\boldsymbol{w_t})-F^*] \propto \sum\nolimits_{v=1}^{V} \frac{D_v}{D} \left(\frac{1}{u_{v,t}\mathbb{P}(A_v|\hat{h}_{v})} -1\right).
\end{equation}
where $u_{v,t}$ is the probability of including vehicle $v$ in round $t$, and we define it as $u_{v,t} =\mathbb{E} ( \mathbf{1}(v \in \mathcal{V}_t))$, where $\mathbf{1}(.)$ is the indicator function for the event in its argument.

To address these challenges, we construct a similar argument to \cite{zhangVehicleSelectionResource2024}, which relies on the convergence framework of \cite{salehiFederatedLearningUnreliable2021},
 and adopt Scheme~2 from \cite{salehiFederatedLearningUnreliable2021}, which allocates $u_{v,t}$ resource blocks to vehicle $v$ on average. Then, the aggregation in Eq. \eqref{eq:standardAggregation} needs to be modified to reflect the inclusion and successful transmission probabilities of the vehicles, i.e.,
\begin{equation}
    \boldsymbol{w}_{t+1} = \sum\nolimits_{v \in \mathcal{V}_t}\frac{D_v }{D} \frac{\boldsymbol{w}_{t,v} \mathbf{1}(v \in \mathcal{V}_t, A_v) }{u_{v,t}\mathbb{P}(A_v|\hat{h}_{v})}.
\end{equation} 

These modifications aim to amplify the contributions of vehicles that are less frequently selected or experience lower transmission success rates. This ensures that vehicles with unreliable connectivity or limited sojourn times still exert meaningful influence on model training, preventing bias toward frequently available clients. \vspace{-0.15cm}

\subsection{Problem Formulation}\vspace{-0.10cm}
In each round, our goal is to minimize the upper bound on the expected reduction of the loss function while keeping the round time as short as possible. We propose a novel bi-objective optimization problem with variable round time as
\begin{subequations} \label{eq:optimizationProblem2} \vspace{-0.3cm}
    \begin{align}
        \mathcal{P}: \min_{\boldsymbol{u}_{t}, \mathbf{R}_t} \quad &  \sum\nolimits_{v \in \mathcal{V}^{\textrm{feas}}_t} \frac{\alpha D_v}{ D u_{v,t}\mathbb{P}(A_v|\hat{h}_{v})} \nonumber \\&+ (1-\alpha) \max_v \left(u_{v,t} \exp (-( 2^\frac{R_{v,t}}{W_v} - 1 ))\right)  \label{eq:objective2}\\
        & \sum\nolimits_{v \in \mathcal{V}^{\textrm{feas}}_t} u_{v,t} \le N, \label{eq:irbConstraint2} \\
        & {u_{\textrm{min}} \le u_{v,t} \le 1}, \quad \textrm{for} \quad v \in \mathcal{V}^{\textrm{feas}}_t, \label{eq:inclusionConstraint2} \\
        & R_{v,t}^{\textrm{min}} \le R_{v,t} \le R_{v,t}^{\textrm{max}}, \quad \textrm{for} \quad v \in \mathcal{V}^{\textrm{feas}}_t, \label{eq:rateConstraint2} 
     \end{align}
\end{subequations}
where $R_{v,t}^{\textrm{min}}$ and $R_{v,t}^{\textrm{max}}$ are defined in Section~\ref{subsection:successfulTransmissionCondition}, the dependence on $R_{v,t}$ of $\mathbb{P}(A_v|\hat{h}_{v})$, $a_v$ and $b_v$ is shown in Eqs. \eqref{eq:condition} and \eqref{eq:outageProbability}, $\mathcal{V}^{\textrm{feas}}_t$ is the set of feasible vehicles for round $t$ such that $\mathcal{V}^{\textrm{feas}}_t = \{ v | R_{v,t}^{\textrm{min}} < R_{v,t}^{\textrm{max}} \}$, $\mathbf{u}_t=[u_{1,t}, u_{2,t},\dots , u_{V,t}]$, $\mathbf{R}_t=[R_{1,t}, R_{2,t},\dots , R_{V,t}]$ and $\alpha \in [0,1] $ controls the emphasis between model convergence (first term) and round time (second term). Constraints~\eqref{eq:irbConstraint2} and \eqref{eq:inclusionConstraint2} make sure that we allocate at most $N$ resource blocks and every vehicle gets at most one resource block, respectively. Constraint \eqref{eq:rateConstraint2} ensures that vehicles transmit below their upper capacity limit and above a minimum rate that guarantees completion before their sojourn time expires.

The first objective in Eq. \eqref{eq:objective2} aims at including as many vehicles as possible in round $t$, following an approach similar to those in~\cite{zhangVehicleSelectionResource2024} and~\cite{salehiFederatedLearningUnreliable2021}. This is equivalent to minimizing the objective defined in~\eqref{eq:objectiveParameter}. Conversely, the second objective in Eq.~\eqref{eq:objective2} aims at making the round as fast as possible by penalizing the lowest rate among clients which sets the round time. Note that, the second term shrinks with higher rates, encouraging faster rounds. Hence, the bi-objective optimization problem~\eqref{eq:optimizationProblem2} balances between taking shorter rounds that yield smaller improvements and longer rounds that result in larger loss function reductions. \vspace{-0.15cm}

\subsection{Solution to Problem $\mathcal{P}$} 

Directly solving the problem $\mathcal{P}$ is intractable due to the intricate coupling between the variables $\mathbf{R}_t$ and $\mathbf{u}_t$. While terms for different vehicles are separated in the objective function, at each term, $u_{v,t}$ is multiplied with a function of $R_{v,t}$ in the denominator of the first term in Eq.~\eqref{eq:objective2}, making the objective function non-convex jointly with respect to both variables. To address this challenge, we adopt a \ac{BCD} approach \cite{wrightCoordinateDescentAlgorithms2015}. \ac{BCD} is an iterative optimization algorithm used to solve problems involving multiple variables. Instead of optimizing all variables at once, which can be hard or computationally expensive, BCD optimizes one block (or group) of variables at a time while keeping the others fixed. It repeats this iterative process until convergence. As shown in~\cite{gorskiBiconvexSetsOptimization2007}, the \ac{BCD} algorithm is guaranteed to converge to a stationary point that is partially optimal, provided the objective function is bi-convex. Partial optimality means that, for a fixed vector $u^{v,t}$, no choice of $R^{t}$ can yield a better objective value, and vice versa. This is the strongest optimality guarantee that can generally be expected for non-convex, and in particular biconvex, optimization problems~\cite{gorskiBiconvexSetsOptimization2007}.

In our solution, we alternate between two steps in each iteration: first, fixing $\mathbf{u}_t$ and optimizing over $\mathbf{R}_t$; then, fixing $\mathbf{R}_t$ and optimizing over $\mathbf{u}_t$. This alternating procedure is repeated until a stationary point is reached. It is straightforward to verify that the objective function in problem $\mathcal{P}$ is convex with respect to $\mathbf{u}_t$ when $\mathbf{R}_t$ is fixed. The first term is a sum of functions of the form $1/u_{v,t}$, which are convex for positive $u_{v,t}$. The second term is a pointwise maximum over convex functions, the function being $u_{v,t}$ itself, and is therefore also convex. Since the sum of convex functions remains convex, the overall objective is convex in $\mathbf{u}_t$.

The convexity of the objective function with respect to $\mathbf{R}_t$ when $\mathbf{u}_t$ is fixed is established in Lemma~\ref{lemma:objectiveFunctionConvexity}. \vspace{-0.1cm}

\begin{lemma} \label{lemma:objectiveFunctionConvexity}
    Consider the optimization problem $\mathcal{P}$ given in \eqref{eq:optimizationProblem2}. The objective function in Eq. \eqref{eq:objective2} is convex with respect to the variables $\mathbf{R}_t$. \vspace{-0.1cm}
\end{lemma}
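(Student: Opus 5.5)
With $\mathbf{u}_t$ fixed, the objective in \eqref{eq:objective2} splits into a separable sum of single-variable functions of $R_{v,t}$ (first term) plus a pointwise maximum of single-variable functions of $R_{v,t}$ (second term). The plan is to prove that each of these one-dimensional functions is convex, and then conclude by closure of convexity under nonnegative sums and pointwise maxima. Throughout I would use the substitution $z=R_{v,t}\ln 2/W_v$, so that $2^{R_{v,t}/W_v}-1=e^z-1$. Because $z$ is an increasing affine function of $R_{v,t}$, convexity in $z$ is equivalent to convexity in $R_{v,t}$.

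\emph{Second term.} Consider $\psi(z)=u_{v,t}\exp(-(e^z-1))$ with $u_{v,t}>0$ fixed. Differentiating twice gives $\psi''(z)=u_{v,t}(e^{2z}-e^z)e^{-(e^z-1)}$. This is nonnegative because $z\ge 0$, i.e. $R_{v,t}\ge 0$. Each $\psi$ is therefore convex, and so is their maximum over $v$.

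\emph{First term.} Write $x=|\hat h_v|^2$, which is fixed given the estimate. From \eqref{eq:condition}, $a_v=k_1(e^z-1)$ and $b_v=k_2(e^z-1)$ with constants $k_1,k_2>0$. Hence the exponent in \eqref{eq:outageProbability} is $g=(x-b_v)/a_v=y-c$, where $y(z)=K/(e^z-1)$, $K=x/k_1$ and $c=k_2/k_1\ge 0$. The relevant term is proportional to $\phi=h\circ g$, with $h(g)=1/(1-e^{-g})$. This is only defined on the region $x>b_v$, i.e. $g>0$. Because $b_v$ is increasing in $R_{v,t}$, that region is an interval in $R_{v,t}$. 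Outside it, $\mathbb{P}(A_v\mid\hat h_v)=0$ and the term is $+\infty$. The extended-value function is therefore convex provided $\phi$ is convex on that interval.

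\textbf{Main obstacle.} The standard composition rule does not apply: $h$ is convex and decreasing, while $y$ is convex rather than concave. So I would check $\phi''\ge 0$ directly. We have $\phi''=h''(g)(y')^2+h'(g)y''$ with
\begin{align*}
h'(g)&=-\frac{e^g}{(e^g-1)^2}, & h''(g)&=\frac{e^g(e^g+1)}{(e^g-1)^3},\\
y'&=-\frac{Ke^z}{(e^z-1)^2}, & y''&=\frac{Ke^z(e^z+1)}{(e^z-1)^3}.
\end{align*}
After cancelling common positive factors, the condition $\phi''\ge 0$ reduces to
\[
\frac{e^g+1}{e^g-1}\cdot\frac{K}{e^z-1}\,e^z\;\ge\; e^z+1 .
\]
Since $K/(e^z-1)=y=g+c\ge g$, the left-hand side is at least $g\coth(g/2)\,e^z$. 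The elementary inequality $t\coth t\ge 1$ gives $g\coth(g/2)\ge 2$. So the left-hand side is at least $2e^z\ge e^z+1$, because $e^z\ge 1$. This establishes convexity of every summand of the first term.

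Summing the convex first-term pieces (each with positive weight $\alpha D_v/(Du_{v,t})$) and adding $(1-\alpha)$ times the convex maximum then yields convexity of \eqref{eq:objective2} in $\mathbf{R}_t$ over the box \eqref{eq:rateConstraint2}.
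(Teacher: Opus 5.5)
Your proof is correct. It reduces convexity to the same scalar inequality as the paper, but it closes that inequality differently, and your way actually works.

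\textbf{Where the two routes agree.} The paper also differentiates the per-vehicle term $\Theta_v$ twice and divides out positive factors. In its notation ($f=2^{R_{v,t}/W_v}=e^z$, $\Xi_3=K$, $\Xi_1=c$, $e^{\Xi_1+\Xi_2}=e^{-g}$) it obtains
$\Lambda(f)=\frac{f}{f^2-1}\frac{1+e^{\Xi_1+\Xi_2}}{1-e^{\Xi_1+\Xi_2}}-\frac{1}{\Xi_3}>0$.
This is exactly your condition $\frac{e^g+1}{e^g-1}\cdot\frac{K e^z}{e^z-1}\ge e^z+1$.

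\textbf{Where the paper's argument fails.} The paper then claims $\Lambda$ is decreasing in $f$ and tends to $+\infty$ at the capacity limit $f=1+\Xi_3/\Xi_1$, so it is positive everywhere. That step does not hold as written.
\begin{itemize}
\item With $E=e^{\Xi_1+\Xi_2}$, differentiating $\frac{1+E}{1-E}$ gives the positive term $+\frac{2f\Xi_3}{(f+1)(f-1)^3}\frac{E}{(1-E)^2}$ in $\partial\Lambda/\partial f$. The paper writes this term with a minus sign.
\item A function that decreases on an interval cannot diverge to $+\infty$ at its right endpoint. In fact $\Lambda\to+\infty$ at both ends, since $\frac{f}{f^2-1}\to\infty$ as $f\to1^+$. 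So $\Lambda$ is not monotone, and positivity in the interior is left unproved.
\end{itemize}

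\textbf{What your route buys.} You need no monotonicity claim. Dropping the noise offset via $\Xi_3/(f-1)=g+\Xi_1\ge g$ and using $t\coth t\ge 1$ gives the inequality pointwise, with margin $2e^z\ge e^z+1$. This certifies convexity on the whole feasible rate interval.

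You also make explicit two points the paper only asserts:
\begin{itemize}
\item You check that $u_{v,t}\exp(-(2^{R_{v,t}/W_v}-1))$ is convex. This genuinely needs $R_{v,t}\ge 0$, since $\psi''<0$ for $z<0$.
\item You handle the region where $\mathbb{P}(A_v\mid\hat h_v)=0$ through the extended-value convention.
\end{itemize}
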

\begin{proof}
    Proof in Appendix~\ref{appendix:objectiveFunctionConvexity}. \vspace{-0.2cm}
\end{proof}

Since our objective is bi-convex in $\mathbf{R}_t$ and $\mathbf{u}_t$, the proposed \ac{BCD} approach is guaranteed to converge to a partially optimal solution for problem~$\mathcal{P}$. The proposed \ac{BCD} method is formulated and solved using the CVXPY framework~\cite{diamond2016cvxpy}. 

We denote our solution to problem~$\mathcal{P}$ as \ac{VR-VFL}, which selects clients and their transmission rates in each round. \vspace{-0.20cm}

\section{Simulation Results} \label{sec:numericalResults}\vspace{-0.10cm}
\subsection{Simulation parameters} \vspace{-0.10cm}
The simulation environment considers a highway scenario \cite{3GPP_TR_36885} with a simulated road segment of 2 km in length and 6 traffic lanes, each with a width of 4m. \acp{RSU} are deployed evenly, 100m apart, along the middle of the road segment.

The wireless channel considers LOS pathloss and shadowing \cite{kyosti2007winner}, with a carrier frequency of 5.9 GHz, a bandwidth of 10 MHz utilizing 20 resource blocks, a noise density of -174 dBm/Hz, and a channel feedback delay of $T=0.5$ ms. Vehicle traffic in each lane is generated as a Poisson process with an arrival rate parameter of 0.2. Each vehicle is assigned a speed drawn uniformly at random between 60-100 km/h. The transmit power for each vehicle is 23 dBm, and vehicles maintain their selected rates for the duration of the round.

\begin{figure*}[!t] 
    \centering
    \begin{minipage}[t]{0.45\textwidth}
        \centering
        \includegraphics[width=\linewidth]{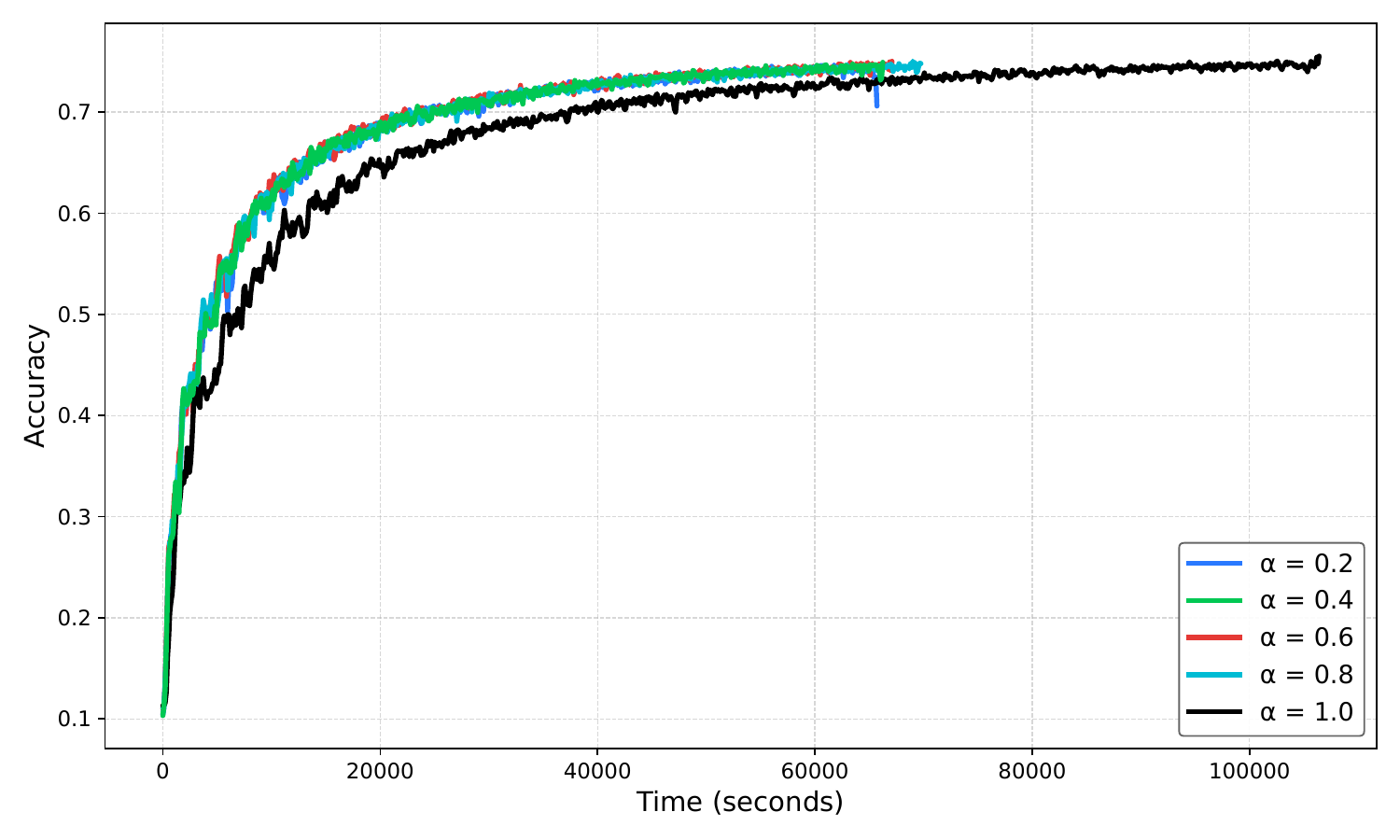}
        \subcaption{Test accuracy for \ac{VR-VFL} for different $\alpha$  (IID).}
        \label{fig:vrvfl_iid}
    \end{minipage}
    \hfill 
    \begin{minipage}[t]{0.45\textwidth}
        \centering
        \includegraphics[width=\linewidth]{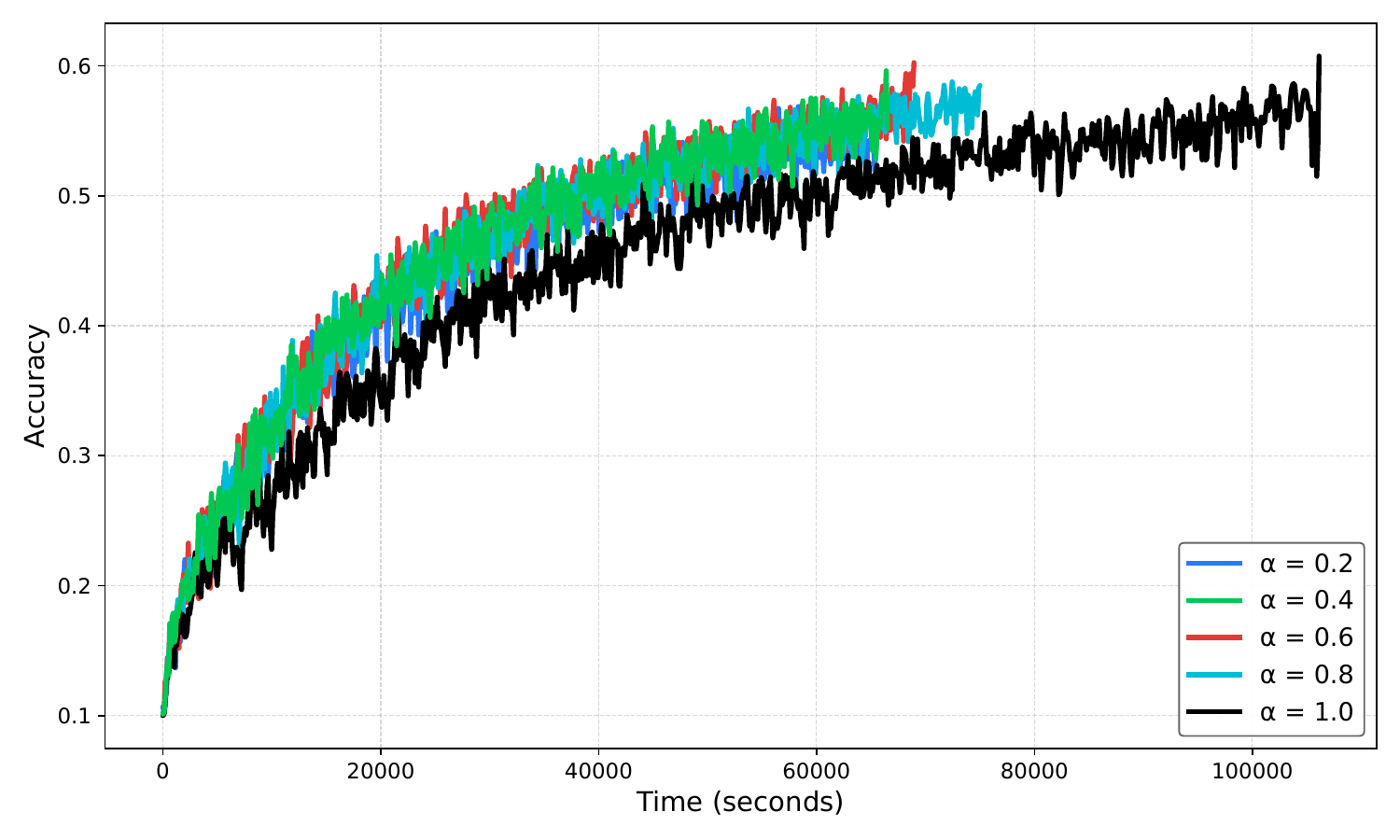}
        \subcaption{Test accuracy for \ac{VR-VFL} for different $\alpha$  (non-IID).}
        \label{fig:vrvfl_noniid}
    \end{minipage}
    \vspace{0.0em} 
    \begin{minipage}[t]{0.45\textwidth}
        \centering
        \includegraphics[width=\linewidth]{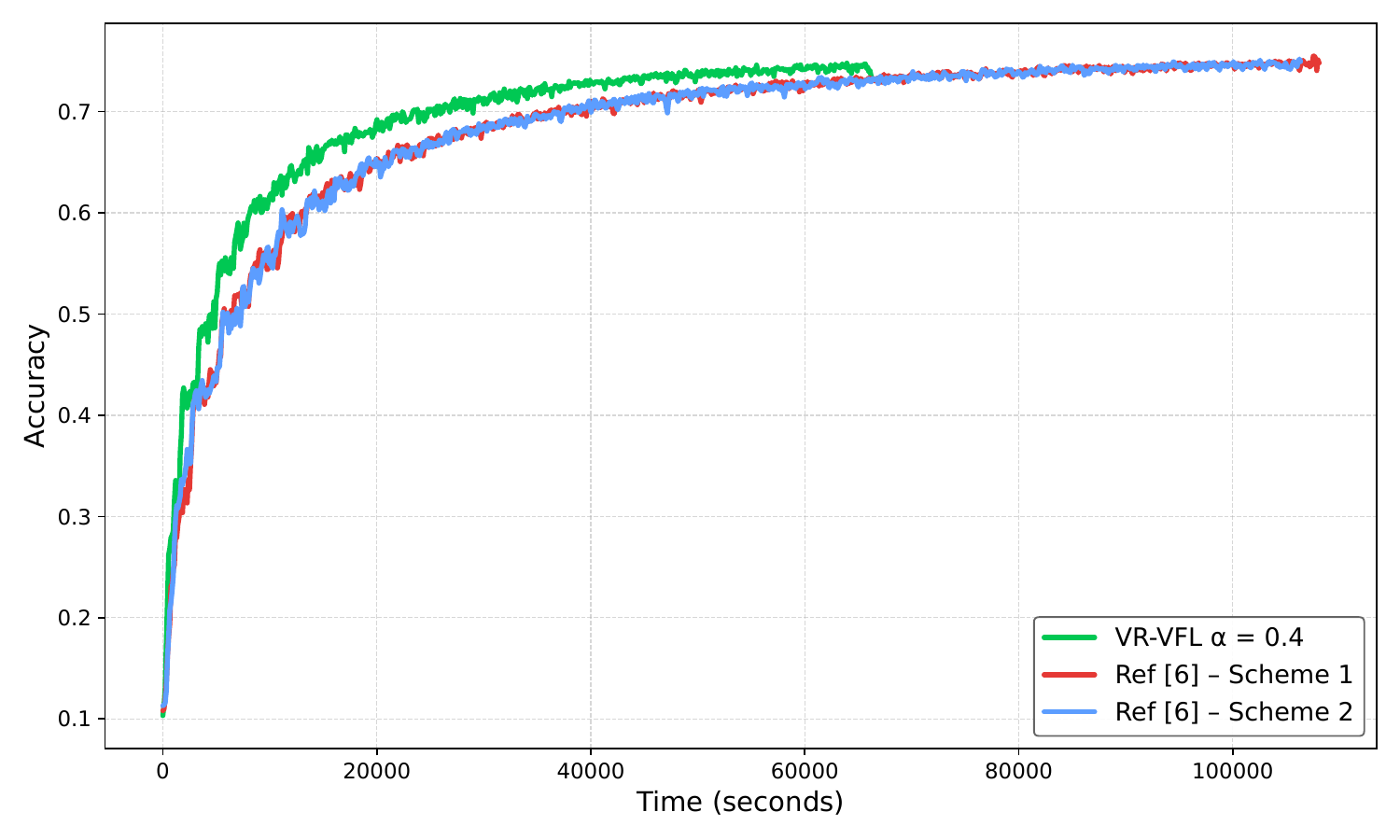}
        \subcaption{Test accuracy for \ac{VR-VFL}, Scheme 1 [6] and Scheme 2 [6].}
        \label{fig:comp_iid}
    \end{minipage}
    \hfill 
    \begin{minipage}[t]{0.45\textwidth}
        \centering
        \includegraphics[width=\linewidth]{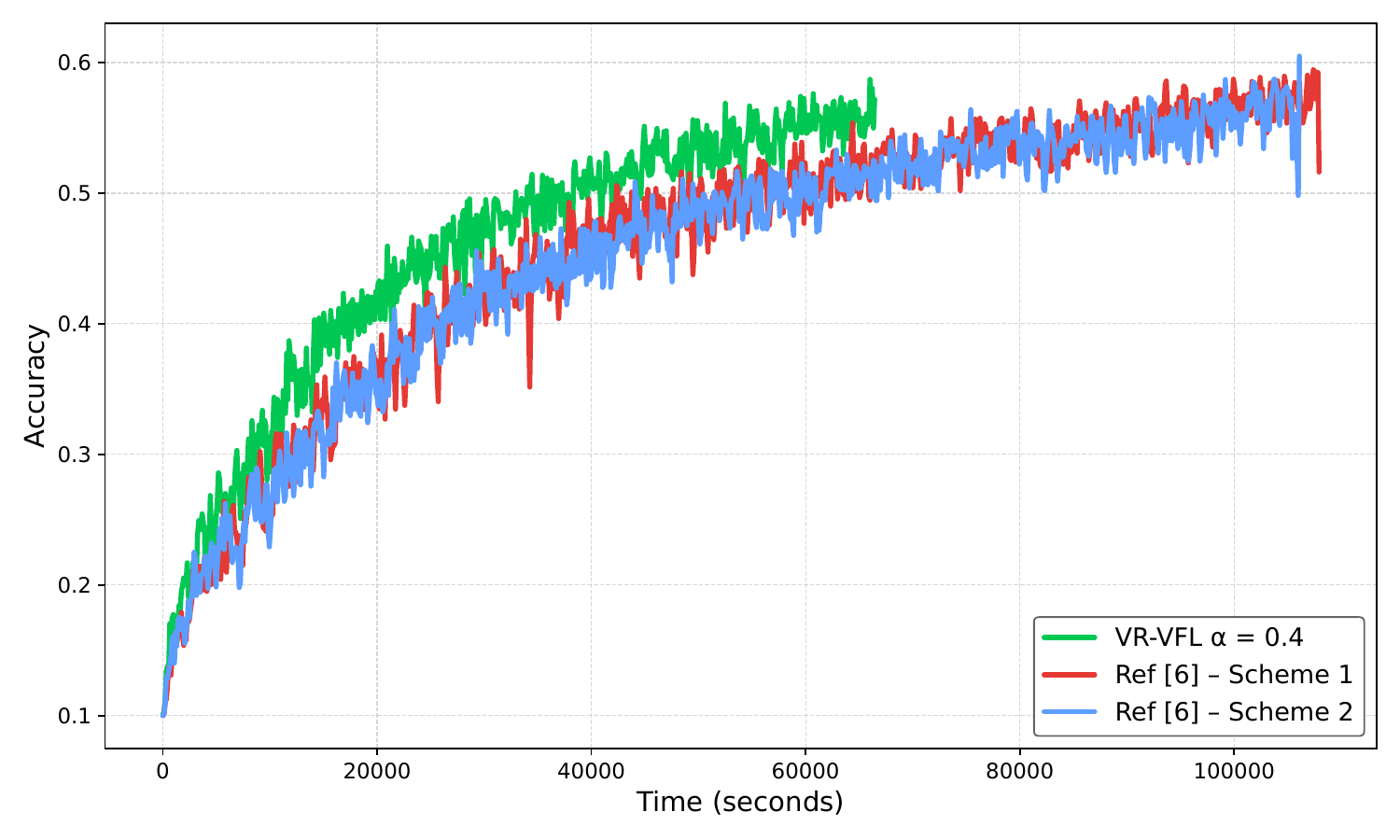}
        \subcaption{Test accuracy for \ac{VR-VFL}, Scheme 1 [6] and Scheme 2 [6].}
        \label{fig:comp_noniid}
    \end{minipage}

    \caption{ Test accuracy vs training time (s) for \ac{VR-VFL} under different $\alpha$: (a) IID, (b) non-IID. Test accuracy comparison for VR-VFL and Schemes 1-2 [6]: (c) IID, (d) non-IID. Results are averaged over 10 independent wireless simulations.}\vspace{-0.3cm}
    \label{fig:2x2_grid} 
\end{figure*} \vspace{-0.0cm}

For the learning model, we consider a scenario where vehicles dynamically enter and exit the region, while the model at the base station remains active and is continuously updated. Image classification is used as a benchmark, based on the CIFAR-10 dataset \cite{cifar10}, leaving more complex vehicular tasks, such as motion prediction, for future work. We use a decaying learning rate as $0.1/(1+\lfloor t/25 \rfloor)$, a momentum of 0.9, a FedProx parameter of 0.0025, a batch size of 32, and 5 local epochs. In the \ac{IID} setting, vehicles have 500 samples from each class. In the \ac{non-IID} setting, vehicles have between 2500 and 7500 samples selected uniformly at random from 1 to 3 randomly selected classes. The model size is 4.38 Mbits. The model consists of an initial convolutional layer with 3 input and 32 output channels, followed by two residual blocks maintaining the number of channels (32 in the first, 64 in the second). Downsampling is performed after each residual block using convolutional layers with stride 2, increasing the channels from 32 to 64 and then from 64 to 128, with skip connections. An adaptive average pooling layer and a classifier with two linear layers follow: the first maps 128 to 128 features with ReLU and dropout (0.1), and the second maps 128 to the 10 output classes.
\vspace{-0.04cm} 

Two schemes from \cite{salehiFederatedLearningUnreliable2021} are used for comparison. Scheme 1 from \cite{salehiFederatedLearningUnreliable2021} uniformly randomly chooses clients each round. Scheme 2 from \cite{salehiFederatedLearningUnreliable2021} solves the problem in \eqref{eq:optimizationProblem2} without the second term, i.e., the special case of VR-VFL with $\alpha=1$. Both schemes select the transmission rates with VR-VFL rate selection algorithm. Note that the method in \cite{paseConvergenceTimeFederated2021} is not used for comparison, as it assumes equal SNR and statistical channel across all clients, which is not feasible in our scenario.
\vspace{-0.2cm}
\subsection{Numerical Results} \vspace{-0.10cm}
The performance is evaluated based on test accuracy versus training time (accumulating round times) and all methods are run for 1000 FL rounds. Figures \ref{fig:vrvfl_iid} and \ref{fig:vrvfl_noniid} show performance of \ac{VR-VFL} for IID and non-IID distributions, respectively, for different $\alpha$. We notice that, decreasing $\alpha$ significantly accelerates convergence. This improvement stems from \ac{VR-VFL} placing a greater emphasis on the second objective in Eq.~\eqref{eq:objective2}, leading to slower clients being selected less frequently. Concurrently, the rates of the slowest clients are adaptively increased to mitigate their impact on the training time. Due to its fast and accurate convergence, converging at 63k seconds compared to 70k and 75k seconds for higher alphas, we set $\alpha=0.4$ for the rest of the experiments. \vspace{-0.00cm}

Figures \ref{fig:comp_iid} and \ref{fig:comp_noniid} compare the performance of \ac{VR-VFL} against Scheme 1 and Scheme 2 from \cite{salehiFederatedLearningUnreliable2021} for both IID and non-IID data. \Ac{VR-VFL} finishes its rounds in $43\%$ less time, achieving convergence in approximately 63k seconds for non-IID settings, while the other two take around 110k seconds to reach the same test accuracy. Note that, Scheme 1 and Scheme 2 show comparable performance, with Scheme 2 being only $1-2\%$ faster. This shows the importance of including the second term in \eqref{eq:optimizationProblem2}, which pushes for higher rates in slower clients, thus reducing round time.
\vspace{-0.2cm}
\section{Conclusion}  \vspace{-0.15cm}
In this paper, we investigated the challenges of \ac{FLVEN} by first employing a practical Gauss-Markov channel model tailored for this dynamic environment. Building upon this, we formulated an optimization problem aimed at minimizing the FL loss while accounting for the unique characteristics of vehicular edge communications. We proved that the problem is bi-convex and proposed a solution based on the BCD method, denoted as \ac{VR-VFL}, which is guaranteed to reach a partially optimum solution. Our numerical results demonstrated that \ac{VR-VFL} achieves significant time savings compared to standard client selection methods in the literature. For future work, we plan to investigate a mixed-integer client selection mechanism instead of a bi-convex optimization problem, which can enforce more strictly the selection of vehicles, as well as vehicular-oriented learning tasks such as trajectory prediction  \vspace{-0.25cm}
\appendices \vspace{-0.35cm}
\section{Proof of Lemma~\ref{lemma:objectiveFunctionConvexity}} \label{appendix:objectiveFunctionConvexity}
Let us consider the objective function in Eq. \eqref{eq:objective2} as two separate terms. The second term is a pointwise maximum of convex functions, which is convex in $\mathbf{R}_t$. The first term includes a sum with a separate term for each vehicle. If we show that each term in the sum is convex, the sum will also be convex, thus completing the proof. For ease, we denote the first term from vehicle $v$ as \vspace{-0.1cm}
\begin{subequations}
\begin{align}
    \Theta_v =& \frac{\alpha D_v}{ u_{v,t}\left(1- \exp(\Xi_1) \exp(\Xi_2 (f)) \right)}, \\
    \Xi_1=& N_0 W_v / P_v L_v (1-\epsilon_v^2), \\
    \Xi_2(f)=& \frac{-|\hat{h}_{v}|^2 \epsilon_v^2}{(f-1) (1-\epsilon_v^2)}= \frac{-\Xi_3}{(f-1)}, \\
    \Xi_3 =& |\hat{h}_{v}|^2 \epsilon_v^2 / (1-\epsilon_v^2),
\end{align}
\end{subequations} \vspace{-0.0cm}
where $f(R_{v,t})=2^{R_{v,t}/W_v}$. Note that \( \Xi_1 + \Xi_2 < 0 \) due to the outage condition \( |\hat{h}_{v}|^2 > b_v \). Thus, \( 1 < f < 1 + \Xi_3 / \Xi_1 \), with the upper bound corresponding to the upper capacity limit when \( |\tilde{h}_{v}|^2 = 0 \), as seen in Eqs. \eqref{eq:sinr} and \eqref{eq:Cap}. Then, the second derivative of $\Theta_v$ with respect to $R_{v,t}$ is
\begin{align}
    \frac{\partial^2 \Theta_v}{\partial R_{v,t}^2} =& \left(\frac{\alpha D_v}{ u_{v,t} \left(1- \exp(\Xi_1) \exp(\Xi_2) \right)^3}\right) \times \nonumber \\
    & \Bigg( \left(\frac{\partial \Xi_2}{\partial R_{v,t}}\right)^2 \left( 1 + \exp(\Xi_1 + \Xi_2) \right) \nonumber \\
    & \quad + \frac{\partial^2 \Xi_2 }  {\partial R_{v,t}^2} \left( 1 - \exp(\Xi_1 + \Xi_2) \right) \Bigg). \label{eq:Hessian}
\end{align}

Then, we write the partial derivatives of $\Xi_2$ as
\begin{align}
    \frac{\partial \Xi_2}{\partial R_{v,t}} =& \Xi_3 \frac{\ln{2}}{W_v} \frac{f}{(f-1)^2} , \\
    \frac{\partial^2 \Xi_2}{\partial R_{v,t}^2} =&- \Xi_3 \frac{\ln^2{2}}{W_v^2} \frac{f(f+1)}{(f-1)^3}.
\end{align} \vspace{-0.2cm}

If the second derivative of $\Theta_v$ is positive, then $\Theta_v$ is convex. Thus, note that the first term in  parenthesis in Eq. \eqref{eq:Hessian} is always positive, then we only need to verify that the second term is positive. We can rewrite convexity condition of $\Theta_v$ as \vspace{-0.3cm}
\begin{align}
   \Lambda(f)= \frac{f}{f^2-1} \frac{\left( 1 + \exp(\Xi_1 + \Xi_2) \right)}{\left( 1 - \exp(\Xi_1 + \Xi_2) \right)} -\frac{1}{\Xi_3} >0,
\end{align}
where $\Lambda$ is obtained by multiplying the second term in parenthesis in Eq. \eqref{eq:Hessian} with the positive number $W_v^2 (f-1)^3/((\ln^2 2) \Xi_3^2 f (f+1) (1-\exp(\Xi_1+\Xi_2 )))$, which does not change its sign. Then, partial derivative of $\Lambda$ w.r.t. $f$ is
\begin{align} \nonumber
    \frac{\partial \Lambda(f)}{\partial f} =& \frac{-(f^2+1)}{(f^2-1)^2} \frac{1+\exp(\Xi_1+\Xi_2 )}{1-\exp(\Xi_1+\Xi_2 )}\\ 
    &- \frac{2 f \Xi_3 }{(f+1)(f-1)^3} \frac{(\exp(\Xi_1+\Xi_2 ))^2}{(1-\exp(\Xi_1+\Xi_2 ))^2}. \label{eq:partialLambda}
\end{align}

Since both terms in Eq. \eqref{eq:partialLambda} are negative, $\Lambda$ is decreasing with respect to $f$. Moreover, note that $\lim_{f \to 1+ \Xi_3/\Xi_1} \Lambda(f) \to \infty$. Therefore, $\Lambda$ is decreasing in $f$ and it is greater than zero at the upper limit of $f=1+ \Xi_3/\Xi_1$, making it positive for all $f>1$. This completes the proof.\vspace{-0.1cm}


\end{document}

%% file: AcronymList.tex
\begin{acronym}
    \acro{FL}{federated learning}
    \acro{FLoWN}{FL over wireless networks}
    \acro{AFL}{asynchronous FL}
    \acro{SGD}{stochastic gradient descent}
    \acro{FLoVN}{FL over vehicular networks}
    \acro{FLVEN}{FL in vehicular edge networks}
    \acro{VN}{vehicular networks}
    \acro{HFL}{hierarchical federated learning}
    \acro{RSU}{roadside unit}
    \acro{I-CSI}{imperfect channel state information}
    \acro{CSI}{channel state information}
    \acro{CDF}{cumulative distribution function}
    \acro{MMSE}{minimum mean square error}
    \acro{SINR}{signal-to-interference-plus-noise ratio}
    \acro{PDF}{probability density function}
    \acro{BCD}{block coordinate descent}
    \acro{VR-VFL}{variable rate - vehicular federated learning}
    \acro{non-IID}{non independent and identically distributed}
    \acro{IID}{independent and identically distributed}
\end{acronym}